\newcommand{\cc}{\mathcal{C}}
\newcommand{\co}{\mathcal{O}}
\newcommand{\cm}{\mathcal{M}}
\newcommand{\Z}{\mathbb{Z}}
\newcommand{\I}{\mathbb{I}}
\newcommand{\om}{\omega}
\newcommand{\empmod}{P}
\declaretheorem[name=Theorem]{thm}
\declaretheorem[name=Definition]{defn}
\begin{document}

\title{Logical paradoxes in quantum computation}         
                                        
\author{Nadish de Silva}

\affiliation{
  \department{Department of Computer Science}              
  \institution{University College London}            
  \streetaddress{Street1 Address1}
  \city{London}
  \postcode{Post-Code1}
  \country{UK}                    
}
\email{nadish.desilva@utoronto.ca}          

\begin{abstract}
The precise features of quantum theory enabling quantum computational power are unclear.  Contextuality---the denial of a notion of classical \emph{physical reality}---has emerged as a promising hypothesis: e.g. Howard \emph{et al.} showed that the magic states needed to practically achieve quantum computation are contextual.  

Strong contextuality, as defined by Abramsky-Brandenburger, is an extremal form of contextuality describing systems that exhibit logically paradoxical behaviour.

After introducing number-theoretic techniques for constructing exotic quantum paradoxes, we present large families of strongly contextual magic states that are computationally optimal in the sense that they enable universal quantum computation via deterministic injection of gates of the Clifford hierarchy. We thereby bolster a refinement of the resource theory of contextuality that emphasises the computational power of logical paradoxes.
\end{abstract}

%



\copyrightyear{2018} 
\acmYear{2018} 
\setcopyright{acmlicensed}
\acmConference[LICS '18]{LICS '18: 33rd Annual ACM/IEEE Symposium on Logic in Computer Science}{July 9--12, 2018}{Oxford, United Kingdom}
\acmBooktitle{LICS '18: LICS '18: 33rd Annual ACM/IEEE Symposium on Logic in Computer Science, July 9--12, 2018, Oxford, United Kingdom}
\acmPrice{15.00}
\acmDOI{10.1145/3209108.3209123}
\acmISBN{978-1-4503-5583-4/18/07}

\maketitle

\section{Introduction}

\textbf{Issues.  }Identification of the precise features of quantum theory accounting for quantum advantages over classical devices, and of the mechanisms by which they do so, remains an open problem in quantum computation.  Physicists and quantum information theorists seek high-level principles behind quantum advantages in order to optimise resources and serve as guideposts towards novel protocols and algorithms.

A promising, emerging hypothesis is \emph{contextuality}: a concept from the foundations of quantum mechanics first articulated by Bell-Kochen-Specker \cite{bell1964einstein, ks} that is commonly understood as the denial of a classical notion of reality.  Essentially, contextuality is the failure of correlated probabilistic data to be reproducible by a classical probability model.  It subsumes \emph{nonlocality}---the failure of separated systems to respect a strong classical assumption about causality \cite{bell66}---as a special case.  With these concepts, tensions between classical and quantum physics, e.g. the Einstein-Podolsky-Rosen paradox \cite{EPR}, are formally articulated to a degree that experiments (e.g. \cite{aspect,hensen2015loophole}) can be performed to refute classical ontological assumptions.

An important lesson of recent studies of contextuality is that it is a logical phenomenon that is most clearly understood independently of quantum physics, the setting in which it first appeared \cite{AB,abramsky2017contextual,posspoly,tqc, logicalbi,ccp,quantummonad,atserias2017generalized,Kishida16}.  At its core, contextuality is concerned with systems (quantum or otherwise) possessing properties whose values are not assumed to be simultaneously ascertainable.  The primary question asked of such systems is whether they can, in principle, be modelled under the assumption that all properties do simultaneously have consistent values.  The central technical tools are Bell inequalities \cite{bell66, klyachko, csw} that bound the strength of classical correlations.

The notion of \emph{strong contextuality}, due to Abramsky-Brandenburger \cite{AB}, is an extremal form of contextuality describing probabilistic systems that exhibit logically paradoxical behaviour.  Whereas standard contextuality is witnessed by the violation of a probabilistic Bell inequality, strong contextuality is witnessed by a logical paradox \cite{logicalbi}.  These paradoxes lie at the heart of contextuality in the sense that any contextual data is a convex mixture of classical and strongly contextual data \cite{abramsky2017contextual}.

Access to the contextual correlations that can arise from quantum systems has been suggested as a critical resource for realising the advantages of quantum devices for a variety of computational \cite{anders, raussendorf, delfosse2015wigner, veitch, pashayan2015estimating,grudka,memory,galvao2005discrete} and communicational \cite{vandamthesis,cubitt,cleve2004consequences,abramsky2017contextual} tasks.  Our motivation is to consider the role of logical paradoxes realised by extremally contextual quantum resources for the task of practically achieving universal quantum computation, i.e. the ability to perform any quantum gate with arbitrary precision.

  We build on recent, seminal work of Howard \emph{et al.} \cite{howard}, following \cite{wootters1986discrete, gibbons2004discrete, galvao2005discrete, cormick2006classicality, gross2006hudson, veitch, mari2012positive, veitch2}, establishing the neccessity of probabilistically contextual resources for achieving universal quantum computation in the model of fault-tolerant quantum computation via state injection.  In this model, a heavily restricted quantum computer capable of only \emph{stabilizer operations} \cite{gotthesis} (whose circuits use only Clifford gates) is implemented.  These restrictions facilitate the error-correction necessary of any practical model of quantum computing; however, they are so strong that the scheme is efficiently classically simulable \cite{gk} and thus offers none of the power of quantum computation.  The ability to perform a non-Clifford gate promotes this scheme to universal quantum computation.  This can be achieved by state injection \cite{gottesman1999demonstrating}: a variation on the quantum teleportation protocol that consumes a resource \emph{magic state} and implements a non-Clifford gate.  In essence, states can be converted, via stabilizer operations, into gates.  Typical magic states must undergo a stochastic distillation and injection process, due to Bravyi and Kitaev \cite{bravkit}, requiring randomly many repeated attempts before being successfully converted into a non-Clifford gate.  Howard \emph{et al.} \cite{howard} showed that single-qudit magic states of odd, prime dimension exhibit standard contextuality with respect to two-qudit stabilizer measurements once paired with an ancilla qudit.

Logical paradoxes are known to be critical resources in a diverse variety of  settings: e.g. nonlocal games \cite{cleve2004consequences, briet2013multipartite}, zero-error information theory \cite{cubitt}, measurement-based  computation \cite{raussendorf2001one}, etc.  Across these settings, a striking pattern emerges: access to  contextual resources can improve the probability of success of achieving some task whereas access to paradoxes can enable deterministic advantage.  We are interested in whether this pattern extends to the complex task of universal quantum computation.

\textbf{Results.  }In this paper, we present large families of multiqudit magic state-based paradoxes (Theorem 1) arising from the Clifford hierarchy \cite{gotthesis}, a distinguished class of gates admitting direct, deterministic injection.  
They are fundamentally different from all known quantum paradoxes in requiring novel number-theoretic arguments (Lemmas 3.1 and 3.2) that we introduce below.  

\textbf{Significance.  }  The existence of these paradoxes is striking given that there is no a priori reason that they should be realised by  quantum states, let alone by those magic states that serve as ideal computational resources. Finding these complex paradoxes is possible only by testing the hypothesis of paradox-driven advantage.  The alignment of resourcefulness of magic states with extremal contextuality in the the logical sense validates logical aspects of contextuality as highly useful from a computational perspective.  Extending the theme of probabilistic vs. deterministic advantages from standard vs. strong contextuality from simpler tasks (e.g. measurement-based computation of nonlinear finite functions) to the complex setting of full-blown universal quantum computation suggests a common underlying logical mechanism behind contextuality-powered advantages in computation and communication.  This suggests further scope for the use of logical methods in reaching a structural understanding of quantum computational advantage.

\textbf{Contents.  }In the next sections, we briefly review contextuality, contextuality as a computational resource, and the magic state model of quantum computation before presenting our paradoxical magic states and indicating future directions towards understanding the paradoxes at the heart of contextuality as powerful computational resources.

\section{Background}
\subsection{Contextuality: probabilistic and paradoxical}
Here, we briefly summarize the prerequisite background on contextuality, following the framework of Abramsky-Brandenburger \cite{AB}.   We emphasize that contextuality is a property of probabilistic data with respect to a set of measurements in an experimental setup wherein it is not assumed that all measurements can be performed together.  Thus, it is a concept that applies not to any particular physical theory, but to the empirical data predicted by a physical theory.  Of course, in the sequel, we are concerned with contextuality of data that arises from performing quantum measurements on quantum states.

An experiment is modelled by a \emph{measurement scenario}: a triple $(\cm, \cc, \co)$ where $\cm$ is a set of measurements and $\cc$ is the set of \emph{contexts}.  A context is a maximal set $C \subset \cm$ of measurements that can be performed together.  The outcome set $\co$ is merely a set of labels for the outcomes possible upon performing a measurement.  Nonlocality experiments are those where each measurement is associated to a site and contexts are choices of a measurement from each site.  For example, the standard Bell scenario  \cite{bell1964einstein} is captured by $\cm = \{A_0, A_1, B_0, B_1\}$, contexts $\cc = \{\{A_a,B_b\}: a,b \in \{0,1\}\}$, and outcome set $\co = \{0,1\}$.  Measurement of each $M \in \cm$ yields a value from the outcome set $\co$; a run of the experiment for the context $C$ produces a joint outcome $o: C \to \co$.   Empirical data from performing experiments on a system in a fixed state is captured by a choice, for each context $C$, of a conditional distribution $\empmod_C = P(-|C): \co^C \to [0,1]$ on joint outcomes.  Data satisfying physically reasonable conditions of generalised nonsignalling---the marginal distributions $\empmod_C|_{C \cap C'}$ and $\empmod_{C'}|_{C \cap C'}$ agree for all $C$ and $C'$---constitute an \emph{empirical model} $\empmod$.

An empirical model $\empmod$ is \emph{noncontextual} (\emph{local}) when its predictions can be accounted for by a \emph{noncontextual (locally causal) hidden variable model}.  Such a hidden variable model can be assumed to have a canonical form \cite{fine,AB}: the hidden variable space is $\Lambda = \co^\cm$, with each hidden variable being a function $\lambda: \cm \to \co$ (a choice of predetermined outcome for all measurements), together with a distribution $\mu: \Lambda \to [0,1]$ over hidden variables that recovers the $\empmod_C$ as marginal distributions $\mu|_C$.  Data arising from a hidden variable model will satisfy all Bell inequalities \cite{peres1999all, klyachko, logicalbi, csw} that bound the strength of classical correlations; thus, contextuality is witnessed by violation of a probabilistic inequality.

A hidden variable $\lambda: \cm \to \co$ and an empirical model $\empmod$ are \emph{consistent} when, for each context $C$, the joint outcome $\lambda$ prescribes to the measurements in $C$ is \emph{possible}:  $\empmod_C(\lambda|_C) \neq 0$.  An empirical model is \emph{strongly contextual} when it is inconsistent with all hidden variables, i.e. for every hidden variable $\lambda$, there is an experiment for which $\lambda$ predicts an impossible outcome.  Whereas the standard probabilistic form of contextuality is witnessed by a violation of an inequality, strong contextuality is witnessed by a logical paradox \cite{ccp}.  Strong contextuality generalises the notion of \emph{maximal nonlocality} of Elitzur-Popescu-Rohrlich \cite{eprohr} and Barrett \emph{et al.} \cite{barrettmax}.  Well-known examples include the GHZ state \cite{ghz} (see Figure \ref{merminstar}) and the PR box \cite{popescu1994quantum}.

\begin{figure}[h]
  \begin{center}
    \includegraphics[scale=0.12]{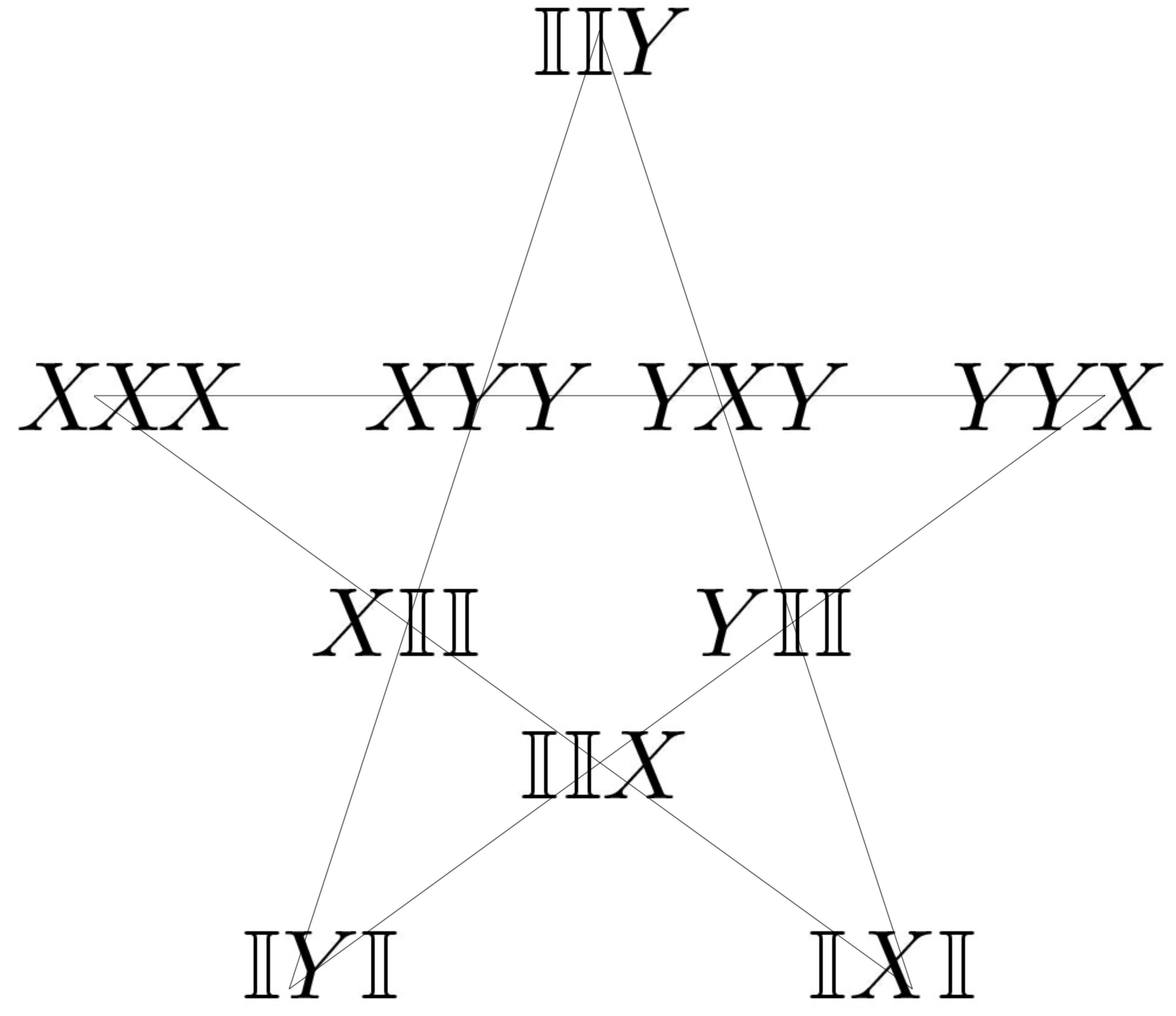}
  \end{center}
  \caption{Mermin's star paradox \cite{mermin}.  The GHZ state is an eigenstate of the qubit Pauli measurements on the horizontal line with eigenvalues +1,-1,-1,-1 (from right to left) thereby fixing the value of any consistent hidden variable $\lambda$ on these measurements.  However, no assignment of outcomes to the other measurements can respect the parity consistency constraints imposed by the algebraic relations between measurements: since the product of the measurements along any non-horizontal line is the identity, the product of outcomes along these lines must be 1.  Thus, the GHZ state is strongly contextual with respect to the above measurements.  This linear algebraic paradox can be cast as a constraint satisfaction problem \cite{cleve2014characterization}.   The magic states we will consider below need not be eigenstates of any stabiliser measurements.}
  \label{merminstar}
\end{figure}

The intuitive idea that a strongly contextual system behaves paradoxically can be formalised.  For each measurement $M \in \cm$ and outcome $o \in \co$, the symbol $M \to o$ is interpreted as ``$M$ is measured resulting in the outcome $o$".  One can construct a \emph{sentence} with, for $M \in C$ of a fixed context and $o \in \co$, symbols $M \to o$ and the connectives AND, OR, and NOT.  The \emph{theory} of an empirical model are all those sentences that are always true after any run of the experiment.  An empirical model is strongly contextual if and only if its theory is logically inconsistent.

Every empirical model $\empmod$ is a convex mixture of a noncontextual part $\empmod^{NC}$ and a strongly contextual part $\empmod^{SC}$ as $\empmod = \text{CF}(\empmod)\empmod^{SC} + [1 - \text{CF}(\empmod)] \empmod^{NC}$.  
Here, $\text{CF}(\empmod)$ is a measure of contextuality known as the \emph{contextual fraction} \cite{AB}; it generalises the nonlocal fraction \cite{barrettmax}.
An empirical model is strongly contextual if and only if $\text{CF}(\empmod)$ is 1 (or, the \emph{noncontextual fraction} $[1-\text{CF}(\empmod)]$ is 0).  Thus, the contextual fraction measures the degree to which a model provides probabilistic access to paradoxical data.  Strongly contextual models are extremally contextual in a geometric sense: they correspond to points on those faces of the nonsignalling polytope with no noncontextual vertices \cite{posspoly, acin}.

\subsection{Contextuality as a computational resource}

Contextuality and nonlocality have been shown to be critical to achieving advantages in a variety of informatic tasks.  A general scheme that results of this flavour fit is as follows: a restricted model of computation or communication is proven to have some upper bound on its success in achieving some task before showing that same model, once supplemented with the ability to perform measurements on a contextual system, gains a probabilistic advantage.   The pattern of strongly contextual resources conferring deterministic advantages is seen across computational and communicational settings.  
Sharing an unlimited number of PR boxes between two parties renders all communication complexity problems trivial \cite{vandamthesis}.  Kochen-Specker configurations play an essential role in boosting the zero-error (i.e. deterministic) capacity of certain classical channels \cite{cubitt}.  Perfect strategies for nonlocal games \cite{cleve2004consequences} require strongly contextual resources \cite{abramsky2017contextual}.

We are primarily interested in two computational tasks that are known to benefit from access to contextuality: measurement-based computation and fault-tolerant quantum computation via state injection.  Measurement-based computation is a generalisation of the model of measurement-based quantum computation (MBQC).  In MBQC, a restricted classical computer, having received classical input, directs measurements on an entangled resource quantum state.  Remarkably, although no quantum gates are directly implemented, such models are capable of universal quantum computation.  Measurement-based computation is an abstraction of MBQC wherein no particular background physical theory is assumed.  Instead, measurements are performed on an abstract resource system whose behaviour is described by an empirical model.  Below, we describe known results on the necessity of contextuality and strong contextuality for measurement-based computers to accomplish the task of computing finite functions; note that these results are not concerned with universal quantum computation.  

Subsequently, we review how universal quantum computation is achieved fault-tolerantly in the circuit model via magic state distillation.  In this setting, contextual resources are not themselves directly measured, in contrast to the measurement-based model.  Rather, elementary operations are performed on a system composed of both an input quantum state and the resource state with the net effect that a non-elementary gate is performed on the input.  Contextuality is a necessary criterion for those states serving as useful resources.  By presenting paradoxes realised by computationally optimal magic states, we establish the relevance of the logical perspective to the task of universal quantum computation.

\begin{figure}[h]
  \begin{center}
  
\begin{tikzpicture}

\draw[fill=gray] (-3.75,2.1) rectangle (-3.25,2.5);
\draw  (-5,4.5) rectangle (-2,4);
\node at (-3.5,4.25) {CONTROL};

\node at (-3.575,3.25) {...};
\node at (-5.05,3.5) {\small $M_1$};
\node at (-3.05,3.5) {\small $M_n$};
\node at (-4,3) {\small $o_1$};
\node at (-2,3) {\small $o_n$};

\node (v2) at (-4.5,2.3) {\small $M_1 \to o_1$};
\node at (-2.5,2.3) {\small $M_n \to o_n$};

\draw  (-5.25,2.5) rectangle (-3.75,2.1);
\draw  (-3.25,2.5) rectangle (-1.75,2.1);
\draw[-triangle 60] (-4.75,4) -- (-4.75,2.5);
\draw [-triangle 60] (-4.25,2.5) -- (-4.25,4) ;
\draw [-triangle 60] (-2.75,4) -- (-2.75,2.5);
\draw [-triangle 60] (-2.25,2.5) -- (-2.25,4) ;

\draw [-triangle 60](-4.5,5) -- (-4.5,4.5);
\draw [-triangle 60](-2.5,4.5) -- (-2.5,5);
\node at (-4.5,5.25) {INPUT};
\node at (-2.5,5.25) {OUTPUT};
\end{tikzpicture}  
  
  \end{center} 
  \caption{Computation of functions by sequential measurements on a resource system.  The input and previous outcomes are processed to choose the next measurement.  Having performed all the measurements, the input and the outcomes are processed to give the output.  All processing is done by a restricted control computer capable of performing mod-2 arithmetic \cite{raussendorf}}.
  \label{MBQC}
\end{figure}
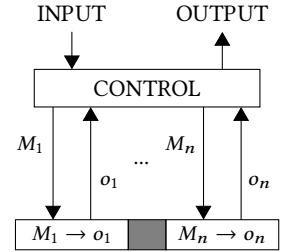

\subsection{Measurement-based computation}

Anders and Browne \cite{anders} proved that a computer capable only of mod-2 addition supplemented with access to measurements on a GHZ state can deterministically compute the OR of two bits (a nonlinear function).  The input bits determine the measurements performed on the three qubits; linear post-processing of the measurement outcomes yields the desired output.  Raussendorf \cite{raussendorf} generalised this to mod-2 linear measurement-based computers.  A computer restricted to performing mod-2 addition that can control measurements on an empirical model can compute a nonlinear function with high probability only if the empirical model is contextual.  The computation's failure probability is lower bounded by the product of the resource state's noncontextual fraction \cite{abramsky2017contextual} and a measure of the function's nonlinearity.  So, \emph{deterministically} computing a nonlinear function requires a \emph{strongly} contextual resource.

\subsection{The magic state model}

Fault-tolerant quantum computing works by supplementing stabilizer operations (that are easy to both implement and classically simulate) with access to magic states.  We briefly summarize elements of this theory here.  The allowed measurements are Weyl operators built from spin operators.  Following Gross' phase space formalism \cite{gross2006hudson}, the dimension $d$ is an odd prime number.  The single-qudit Pauli spin matrices are defined by $$X(q) \ket{j} = \ket{j + q} \quad \text{ and } \quad Z(p) \ket{j} = \om^{p j} \ket{j}$$ where $$\om = e^{2 \pi i / d}$$ and addition is modulo $d$.  A Weyl operator, represented by \emph{phase point }coordinates in $\Z_d^{2}$, is defined by $$W(p, q) = \om^{-2^{-1}pq} Z(p) X(q)$$ where $2^{-1}$ is the multiplicative inverse of $2$ in $\Z_d$.  An $n$-particle Weyl operator, represented by coordinates in $\Z_d^{2n}$, is defined as $$W(p_1, q_1, ..., p_n, q_n) = W(p_1, q_1) \otimes ...  \otimes  W(p_n, q_n).$$
For convenience, we denote a point $(p_1, q_1, ..., p_n, q_n)$ in an $n$-particle phase space as $(\textbf{p}, \textbf{q})$.  The \emph{symplectic product} of two phase space points is defined by: $$[(\textbf{p}, \textbf{q}),(\textbf{p'}, \textbf{q'})] = \sum_{i=1}^n p_i q'_i - p'_i q_i.$$  Weyl operators obey a composition law: $$W(\textbf{p},\textbf{q}) W(\textbf{p'},\textbf{q'}) = \om^{2^{-1} [(\textbf{p},\textbf{q}), (\textbf{p'},\textbf{q'})]} W(\textbf{p}+\textbf{p'}, \textbf{q}+\textbf{q'}).$$  Therefore, $$W(\textbf{p},\textbf{q}) W(\textbf{p'},\textbf{q'}) = W(\textbf{p'},\textbf{q'}) W(\textbf{p},\textbf{q}) = W(\textbf{p}+\textbf{p'}, \textbf{q}+\textbf{q'})$$ if and only if $$[(\textbf{p},\textbf{q}), (\textbf{p'},\textbf{q'})] = 0.$$  

The $n$-Weyl operators (with phases) form the $n$-Pauli group: $$\cc_1^n = \{\om^{2^{-1}m} W(\textbf{p},\textbf{q}) : m \in \Z_{2d}, (\textbf{p},\textbf{q}) \in \Z_d^{2n}\}.$$  The Clifford gates are those unitaries preserving the Pauli group: $$\cc_2^n = \{U : U \cc_1^n U^\dagger \subseteq \cc_1^n\}.$$  The Clifford hierarchy \cite{gotthesis} is defined inductively:  $$\cc_k^n = \{U : U \cc_1^n U^\dagger \subseteq \cc_{k-1}^n\}.$$  
It forms a nested sequence under inclusion:  $$\cc_1^n \subset \cc_2^n \subset \cc_3^n \subset ... .$$
The magic states we will consider arise from the strict third level of the Clifford hierarchy: $\cc^n_3 \setminus \cc^n_2$.  Cui, Gottesman, and Krishna \cite{cui2017diagonal}, building on work of Campbell \cite{campbell2014enhanced} and Howard-Vala \cite{howard2012qudit}, give an explicit description of all diagonal gates in the $k^{\text{th}}$ level.  (We note that not much is presently known about the nondiagonal gates of the Clifford hierarchy.)  For $d > 3$, diagonal gates of the third level are those of the form: $$ U_\Phi = \sum_{\textbf{j} \in \Z_d^n} \om^{\Phi(\textbf{j})} \ket{\textbf{j}} \bra{\textbf{j}}$$ where $\Phi$ is an $n$-variable polynomial over $\Z^d$ of degree $3$.  Every strict third-level gate yields a magic state $$\ket{\Phi} = U_\Phi \ket{+}^{\otimes n} = d^{-n} \sum_{\textbf{j} \in \Z_d^n} \om^{\Phi(\textbf{j})} \ket{\textbf{j}}.$$  As an example, the two-qutrit controlled-$S$ gate (where $S = T^2$ and $T$ is a higher-dimensional generalisation \cite{howard2012qudit} of the $\pi/8$ gate) is $U_{j_1 j_2^2} \in \cc^2_3$.  Its corresponding magic state was computationally found to be strongly contextual in \cite{de2017graph}. 

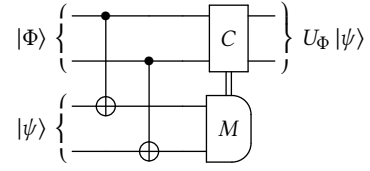
\begin{figure}[h]
\begin{center}

\hspace{2em}
\Qcircuit @C=1em @R=1em {
\lstick{} & \ctrl{2} & \qw & \qw & \multigate{1}{C} & \qw && \dstick{\quad U_\Phi\ket{\psi}}  \\
\lstick{} & \qw & \ctrl{2} & \qw & \ghost{C} & \qw & \rstick{} 
\inputgroupv{1}{2}{.9em}{.96em}{\ket{\Phi}}\\
\lstick{} & \targ & \qw & \qw & \multimeasureD{1}{M} \cwx[-1]  \\
\lstick{} & \qw & \targ & \qw & \ghost{M}
\inputgroupv{3}{4}{.9em}{.98em}{\ket{\psi}} \\
{\gategroup{1}{6}{2}{6}{1em}{\}}}
}
\end{center}
\caption{Schematic of a stabilizer circuit deterministically implementing a diagonal gate in the third level of the Clifford hierarchy \cite{howard2017application}.  CNOT gates entangle the input state and the magic state before a stabilizer measurement is made on the input.  A Clifford correction, conditioned on the measurement outcome, is performed on the magic state lines resulting in the input having a gate applied to it.}
\end{figure}

Magic states arising from diagonal third-level gates $U_\Phi \in \cc^n_3 \setminus \cc^n_2$ are optimal for enabling quantum universality in that they admit direct deterministic protocols for injecting the gate $U_\Phi$ \cite{howard2017application}.  Higher-level gates require more complex injection protocols involving access to EPR pairs and ancillary resources.  Gates from outside the Clifford hierarchy may be implemented via state injection; however, this is a stochastic process that requires randomly many attempts.

\section{Paradoxical magic states}
\subsection{Summary of results}

We now define strong magic states as those corresponding to two-variable cubic polynomials with \emph{no local terms}, i.e. the coefficients for the $j^3, k^3$ terms vanish, and show that they exhibit paradoxes with respect to stabilizer measurements: $$\cm = \{W(\textbf{p},\textbf{q}) : (\textbf{p},\textbf{q}) \in \Z_d^{2n}\}.$$  The contexts are the maximal commutative subsets of $\cm$.  Thus, we are concerned with \emph{bona fide} contextuality rather than  nonlocality.  The possible outcomes $\co$ of measuring a Weyl operator is $\Z_d$ with $k \in \Z_d$ labelling the eigenvalue $\omega^k$. Proofs are found in the following subsection.

\begin{defn}
A two-qudit magic state $$\ket{\Phi} = d^{-2} \sum_{j,k \in \Z_d^2} \om^{\Phi(j,k)} \ket{j}\ket{k} $$ is \emph{strong} if $$\Phi(j,k) = \phi_1 j^2 k + \phi_2 j k^2 + q(j,k)$$ with $q(j,k)$ being any quadratic polynomial and either $\phi_1$ or $\phi_2 \not\equiv 0$.
\end{defn}

These states are capable of injecting a gate built from arbitrary combinations of controlled-S gates on either qudit.  The restrictions on local terms are not strictly necessary; they strike a balance between sufficient generality to establish a clear pattern whilst admitting tractable analytic arguments.  Indeed, computations in low dimensions are consistent with the possibility that \emph{all} magic states for diagonal $\cc^2_3 \setminus \cc^2_2$ gates in prime dimension 2 mod 3 are strongly contextual.

\begin{restatable}{thm}{main}
\label{thm:mainthm}
Suppose that the prime dimension $d \not\equiv 1 \;(\textrm{mod}\ 3)$.  
  All strong magic states $\ket{\Phi}$ are strongly contextual with respect to stabilizer measurements.  It follows immediately that the states $C \ket{\Phi}$, where $\ket{\Phi}$ is strong and $C$ is any Clifford gate, are also strongly contextual.
\end{restatable}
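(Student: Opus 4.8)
The plan is to prove strong contextuality directly from the definition: assuming a consistent global assignment (hidden variable) $\lambda:\cm\to\co$ exists, I will derive a contradiction.

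\textbf{Step 1 (composition law forces local additivity).} First I would extract a state-independent constraint. If $W(\mathbf{u})$ and $W(\mathbf{v})$ commute then $W(\mathbf{u})W(\mathbf{v})=W(\mathbf{u}+\mathbf{v})$, so on any simultaneous eigenvector the eigenvalue of $W(\mathbf{u}+\mathbf{v})$ is the product of the eigenvalues of the other two. Since $\lambda$ restricted to a context must coincide with the outcomes of some genuinely supported joint eigenvector, this forces $\lambda(W(\mathbf{u}+\mathbf{v}))=\lambda(W(\mathbf{u}))+\lambda(W(\mathbf{v}))$ in $\Z_d$ whenever $[\mathbf{u},\mathbf{v}]=0$. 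Hence $\lambda$ is additive on every maximal commuting set (a Lagrangian subspace of $\Z_d^{4}$) and homogeneous along each line $\{t\mathbf{v}\}$; equivalently, $\lambda$ is a linear functional on each context, and these functionals must agree on the lines where contexts overlap. This holds for any hidden variable, irrespective of the state.

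\textbf{Step 2 (possibility as nonvanishing of a cubic exponential sum).} The information specific to $\ket{\Phi}$ enters through which joint outcomes are possible. For a context generated by two commuting Weyl operators, a joint outcome is possible exactly when the associated simultaneous eigenspace overlaps $\ket{\Phi}$; writing $\ket{\Phi}=U_\Phi\ket{+}^{\otimes2}$ and expanding the overlap in the relevant eigenbasis, possibility reduces to the nonvanishing of a complete exponential sum $\sum_{j,k\in\Z_d}\om^{\Phi(j,k)+(\text{linear in }j,k)}$ of the cubic phase. I would evaluate these sums by summing out one variable, which collapses the cubic to the condition that the remaining variable lie on the level set of a polynomial of degree at most three governed by $\phi_1,\phi_2$. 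Thus, in each relevant context the supported outcomes are precisely the solutions of a low-degree polynomial equation over $\Z_d$, and the cubic coefficients $\phi_1,\phi_2$ (nonzero by the definition of a strong state) make this equation genuinely cubic.

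\textbf{Step 3 (assembling the contradiction).} Finally I would feed the additivity of Step 1 into the support constraints of Step 2 across a carefully chosen family of contexts, varying the $Z$-components (equivalently, the linear terms of the sums) so that the forced values of $\lambda$ sweep all of $\Z_d$. The consistency conditions then take the shape ``a prescribed element must simultaneously be a prescribed cube and lie in a prescribed additive coset.'' This is where the hypothesis $d\not\equiv1\;(\textrm{mod}\ 3)$ is decisive: it is equivalent to the cubing map $x\mapsto x^{3}$ being a bijection of $\Z_d$ (no nontrivial cube roots of unity), which pins down each value uniquely and, threaded through the overlapping contexts, yields an overdetermined system with no solution, contradicting the existence of $\lambda$. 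I expect this step, together with the exponential-sum evaluation of Step 2, to be the main obstacle, and it is precisely the content of the number-theoretic Lemmas 3.1 and 3.2: when $d\equiv1\;(\textrm{mod}\ 3)$ the extra cube roots would relax the constraints enough to possibly admit a global section, so the restriction on $d$ is genuinely needed. The closing claim about $C\ket{\Phi}$ is then immediate, since a Clifford $C$ satisfies $C\cc_1^2 C^\dagger=\cc_1^2$, so conjugation merely relabels the stabilizer measurements and permutes the contexts; this induces an isomorphism of measurement scenarios carrying the empirical model of $\ket{\Phi}$ to that of $C\ket{\Phi}$, and strong contextuality is invariant under such relabelings.
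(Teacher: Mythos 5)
Your proposal has the right skeleton---refute every global assignment $\lambda$ by exhibiting a context where it predicts a probability-zero outcome, and dispatch $C\ket{\Phi}$ by Clifford covariance (that closing part is correct and matches the paper)---but both load-bearing steps have genuine gaps. In Step 1 you stop at additivity of $\lambda$ \emph{within each context}, plus agreement on overlaps. The paper's Lemma 3.1 is a strictly stronger statement and is not a formality: for $n \geq 2$ qudits, chaining the additivity constraints \emph{across} several overlapping contexts forces $\lambda$ to be a single \emph{global} linear functional $\lambda(\mathbf{p},\mathbf{q})=\bm{\lambda}\cdot(\mathbf{p},\mathbf{q})$ with $\bm{\lambda}\in\Z_d^4$ (e.g.\ one shows $\lambda(1,k,0,0)=\lambda(1,0,0,0)+\lambda(0,k,0,0)$ by routing through commuting two-qudit decompositions---impossible with a single qudit, since $Z$ and $X$ do not commute). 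This is what collapses the hidden-variable space from $d^{d^{4}}$ functions to $d^{4}$ candidates parametrised by $(\lambda_1,\lambda_2,\lambda_3,\lambda_4)$, and the entire endgame of the paper consists of polynomial identities in these four parameters. With only per-context linearity, your Step 3 has no finite parametrisation to work with, and nothing in the proposal explains how the remaining (doubly exponential) freedom would be controlled.

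Step 3 is where the theorem would actually be proved, and it is a plan rather than an argument: no contexts are specified, no consistency equations are written down, and the claimed ``overdetermined system with no solution'' is asserted, not exhibited; indeed you concede that the main obstacle ``is precisely the content of the number-theoretic Lemmas 3.1 and 3.2.'' What is missing is exactly the paper's substance: a master criterion (Lemma 3.2) stating that a joint outcome $(A,B)$ is impossible iff an explicit cubic $\Psi(x,y)$---in the operator-power variables $x,y$ of $U^xV^y$, not the basis labels $j,k$ of your Step 2, a reorganisation that avoids the Gauss-sum phases your partial summation would generate---is a permutation polynomial for all $j,k$; and then the three tuned context families of Table 1. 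Concretely, type $\mathrm{I}_\alpha$ with $\alpha=2\lambda_2\phi_2$ and type $\mathrm{II}_\alpha$ with $\alpha=2\lambda_4\phi_1$ force $\lambda_3=-\lambda_2(2\lambda_4\phi_2+\lambda_2\phi_1)$ and $\lambda_1=-\lambda_4(2\lambda_2\phi_1+\lambda_4\phi_2)$ on pain of inconsistency, and type $\mathrm{III}_{\alpha,\beta}$ with $\beta=\phi_1^{-1}$ (if $\phi_2\equiv-1$) or $\beta=\phi_1^{-1}(\phi_2+1)$ (otherwise) then factors as a nonzero multiple of a perfect cube $(y+c)^3$. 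Note also that your intended use of $d\not\equiv 1\ (\mathrm{mod}\ 3)$ is inverted relative to the mechanism that works: bijectivity of cubing does not ``pin down each value uniquely''; it certifies, via Dickson's theorem, that $(y+c)^3$ is a \emph{permutation} polynomial, i.e.\ that the corresponding exponential sum vanishes and the outcome $\lambda$ predicts has probability zero. Finding parameters that produce such a perfect cube uniformly in $j,k$ is the real work, and the proposal contains no substitute for it.
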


We prove that for any strong magic state $\ket{\Phi}$, any Clifford gate $C$, and any hidden variable $\lambda: \Z_d^4 \to \Z_d$, $\lambda$ is inconsistent with $C \ket{\Phi}$.  After reducing to the case where $\phi_1 \not\equiv 0$, $q(j,k) = 0$, and $C = \I$, we show that $\lambda$ must predict the occurrence of an impossible joint outcome for one of the following $d(d+1)$ pairs of commuting measurements.

\begin{table}[H]

\label{tab:contexts}
\begin{tabular}{lllll}
\textbf{Type\quad}                 & \textbf{Operators}                                                    & \textbf{Phase points}                                 &  &  \\
$\mathrm{I}_\alpha$           & $Z \otimes \I$ and $\I \otimes Z^\alpha X$                   & $(1,0,0,0)$ and $(0,0,\alpha,1)$             &  &  \\
$\mathrm{II}_\alpha$          & $\I \otimes Z$ and $Z^\alpha X \otimes \I$                   & $(0,0,1,0)$ and $(\alpha, 1,0,0)$            &  &  \\
$\mathrm{III}_{\alpha,\beta}$ & $Z \otimes Z^\beta$ and $X \otimes Z^\alpha X^{-\beta^{-1}}\;$ & $(1,0,\beta, 0)$ and $(0,1,\alpha, -\beta^{-1})$ &  & 
\end{tabular}
\caption{The three families of contexts needed for our argument.  Here, $\alpha, \beta \in \Z_d$ and $\beta \neq 0$.  Operator phases omitted. }
\end{table}

The following, surprising lemma allows us to dramatically reduce (from exponentially to polynomially many) the number of hidden variables $\lambda: \Z_d^{2n} \to \Z_d$ we must consider.  It was established earlier by Delfosse  \emph{et al.} \cite{delfosse2016}; we give an independently obtained, algebraic proof in the following subsection.
\begin{restatable}{lemma}{lineargs}
\label{lem:lineargs}
Suppose that $n \geq 2$ and that $\lambda : \Z_d^{2n} \to \Z_d$ is a hidden variable for $n$-particle stabilizer quantum mechanics that is consistent with a quantum state.  Then, $\lambda(\textbf{p},\textbf{q}) = \bm{ \lambda } \cdot (\textbf{p},\textbf{q})$ for some $\bm{\lambda} \in \Z_d^{2n}$.
\end{restatable}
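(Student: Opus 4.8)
The plan is to reduce the statement to a purely combinatorial fact about the symplectic space $\Z_d^{2n}$ and then settle that fact by a short cocycle computation. First I would extract the only consequence of consistency that is actually needed. In each context---a maximal set of pairwise commuting Weyl operators---the composition law reads $W(\textbf{p},\textbf{q})W(\textbf{p}',\textbf{q}') = W(\textbf{p}+\textbf{p}',\textbf{q}+\textbf{q}')$ exactly, since the phase $\om^{2^{-1}[(\textbf{p},\textbf{q}),(\textbf{p}',\textbf{q}')]}$ is trivial when the symplectic product vanishes. Hence eigenvalues multiply, and any joint outcome that is \emph{possible} for a state must respect this group law. This forces $\lambda(v+w)=\lambda(v)+\lambda(w)$ whenever $[v,w]=0$; as every commuting pair extends to a full context, additivity holds for all symplectically orthogonal $v,w$. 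Applying this with $w=v$ and iterating also gives homogeneity $\lambda(kv)=k\lambda(v)$. The task thus becomes the algebraic claim: a homogeneous $\lambda:\Z_d^{2n}\to\Z_d$ that is additive on all commuting pairs must be linear.

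Next I would subtract off the candidate linear part. Fix the standard symplectic basis $e_i,f_i$ (the $Z$- and $X$-type phase points on particle $i$), let $L$ be the linear functional agreeing with $\lambda$ on this basis, and set $g=\lambda-L$. Then $g$ inherits commuting-additivity and homogeneity and vanishes on every $e_i,f_i$. Since distinct particles commute, any $v$ splits as $v=\sum_i v^{(i)}$ into single-particle parts, and repeated commuting-additivity yields $g(v)=\sum_i \gamma_i(p_i,q_i)$, where $\gamma_i$ denotes $g$ restricted to the particle-$i$ symplectic plane. It therefore suffices to prove each $\gamma_i\equiv 0$. Within one particle's two-dimensional plane only parallel vectors commute, so additivity alone gives nothing beyond homogeneity; the real content must come from the cross-particle relations, and this is precisely where the hypothesis $n\ge 2$ enters.

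The key step analyses the additivity defect $D_i(a,a'):=\gamma_i(a+a')-\gamma_i(a)-\gamma_i(a')$. For $i\ne j$, applying commuting-additivity to a vector supported on particles $i$ and $j$ and using the decomposition above gives $D_i(a,a')+D_j(b,b')=0$ whenever $[a,a']_i+[b,b']_j=0$, where $[\,\cdot,\cdot\,]_i$ is the planar symplectic form. As that planar form is surjective onto $\Z_d$, comparing two $i$-pairs of equal symplectic product against a common $j$-pair shows that $D_i(a,a')$ depends only on $s:=[a,a']_i$; write $D_i(a,a')=\phi(s)$, which is even by symmetry of $D_i$. On the other hand $D_i$ is automatically a $2$-cocycle: the identity $D_i(a,a')+D_i(a+a',a'')=D_i(a',a'')+D_i(a,a'+a'')$ holds for every $\gamma_i$. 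Substituting $D_i=\phi\circ[\,\cdot,\cdot\,]_i$ and evaluating on a triple realising pairwise products $(s,u,r)=(1,u,0)$ collapses this to $\phi(1+u)=\phi(1)+\phi(u)$, so $\phi$ is additive and $\phi(k)=k\phi(1)$. Evenness then gives $2k\phi(1)=0$ for all $k$, and since $d$ is odd we get $\phi\equiv 0$. Hence $D_i\equiv 0$, so $\gamma_i$ is genuinely additive; being additive and vanishing on $e_i,f_i$, it is identically zero. Therefore $g\equiv 0$ and $\lambda=L$ is linear.

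I expect the main obstacle to be this key step. The naive ``helper-vector'' manoeuvres---inserting an ancillary phase point to force a non-commuting pair to commute---are circular, as they invariably reintroduce an unknown of the same complexity. The resolution is to stop trying to prove global additivity directly and instead isolate the defect $D_i$, prove it is a function of the symplectic product alone (using a second particle), and then exploit the intrinsic cocycle identity together with the oddness of $d$ to annihilate it.
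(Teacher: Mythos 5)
Your proof is correct, but it follows a genuinely different route from the paper's. The shared part is the setup: consistency forces $\lambda(v+w)=\lambda(v)+\lambda(w)$ on symplectically orthogonal pairs, distinct particles always commute, so $\lambda$ splits into single-particle parts and the whole issue is linearity of each part. From there the paper does precisely the helper-vector computation you dismiss as circular: it writes $\lambda(1,k,0,0)=\lambda(1,k,0,-2^{-1}k)+\lambda(0,0,0,2^{-1}k)$, expands the first term through the points $(1,2^{-1}k,1,-2^{-1}k)$ and $(0,2^{-1}k,-1,0)$, and uses the rule that any point with $p_1q_1=-p_2q_2$ splits completely into its four coordinates (since $(p_1,0,p_2,0)$ and $(0,q_1,0,q_2)$ then commute); the borrowed second-particle contributions cancel in pairs, giving $\lambda(1,k,0,0)=\lambda(1,0,0,0)+\lambda(0,k,0,0)$ outright. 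So such manoeuvres do terminate when the ancillary points are chosen with the $2^{-1}k$ shifts --- that is where the paper's cleverness lies, and also where it silently uses oddness of $d$. Your defect-and-cocycle argument is a sound alternative: the reduction of $D_i$ to a function $\phi$ of the planar symplectic product (using the second particle and surjectivity of the form) is correct; the automatic cocycle identity, specialised to a triple with pairwise products $[a,a']=1$, $[a,a'']=u$, $[a',a'']=0$, does collapse to $\phi(1+u)=\phi(1)+\phi(u)$; and evenness of $\phi$ together with odd $d$ then forces $\phi\equiv 0$, hence genuine additivity of each $\gamma_i$. What your route buys is structure: it isolates exactly where $n\geq 2$ enters (factoring the defect through the symplectic form) and where odd $d$ enters (an even additive function on $\Z_d$ must vanish), and its cohomological phrasing resonates with the cohomological witnesses of contextuality mentioned in the paper's conclusions. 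What the paper's route buys is brevity and complete explicitness --- a four-line computation with no auxiliary notions --- at the cost of helper vectors whose choice looks unmotivated until one has seen it work.
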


We then establish a number-theoretic criterion for the impossibility of a joint outcomes upon performing measurements on a state $\ket{\Phi}$.  A \emph{permutation polynomial} is a (possibly multivariable) polynomial over $\Z_d$ that takes each value in $\Z_d$ equally many times:  $$\forall \textbf{x} \in \Z_d^n, \quad |p^{-1}(\textbf{x})| = d^{n-1}.$$  In general, the problem of determining whether a polynomial over a finite field is a permutation polynomial is a difficult one; its study remains an active area of number theory \cite{hou2015permutation}.

\begin{restatable}{lemma}{permpoly}
\label{lem:permpoly}
The joint outcome $(A,B)$ is impossible for the measurement of $U = W(\textbf{p}, \textbf{q})$ and $V = W(\textbf{p'}, \textbf{q'})$ with $[U,V] = 0$ on the state $\ket{\Phi}$ if and only if, for all $(j,k) \in \Z^2_d$, the polynomial $\Psi(x,y)$ is a permutation polynomial where
\begin{equation*}
\begin{split}
\Psi(x,y) &= - x A - y B - 2^{-1}((x p_1 + y p'_1) (x q_1 + y q'_1) + (x p_2  \\ 
&\quad\; + y p'_2) + (x q_2 + y q'_2)) +  j (x p_1 + y p'_1) + k (x p_2 + y p'_2) \\ 
&\quad\; + \Phi(j - (x q_1 + y q'_1),k - (x q_2 + y q'_2)).
\end{split}
\end{equation*}
\end{restatable}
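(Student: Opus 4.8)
The plan is to translate the combinatorial notion of an impossible outcome into the vanishing of a projected state vector, and then to convert that vanishing into a number-theoretic balancedness condition. Since $U$ and $V$ commute and each Weyl operator has order $d$ in odd prime dimension, their joint spectral projector onto the eigenvalue pair $(\om^A,\om^B)$ factors as a product of the two single-operator projectors, giving $\Pi_{A,B} = d^{-2}\sum_{x,y \in \Z_d}\om^{-xA-yB}\,U^x V^y$. The outcome $(A,B)$ is impossible precisely when the probability $\|\Pi_{A,B}\ket{\Phi}\|^2 = \langle\Phi|\Pi_{A,B}|\Phi\rangle$ is zero, which for a projector is equivalent to the vector equation $\Pi_{A,B}\ket{\Phi}=0$. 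Because $[U,V]=0$, the symplectic phase in the Weyl composition law drops out and $U^x V^y = W(x\textbf{p}+y\textbf{p'},\,x\textbf{q}+y\textbf{q'})$; writing $P_i = xp_i + yp'_i$ and $Q_i = xq_i + yq'_i$ repackages the projector as a single $(x,y)$-indexed sum of Weyl operators, which is the form I want to apply to $\ket{\Phi}$.

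First I would compute the action of each $W(P_1,Q_1,P_2,Q_2)$ on $\ket{\Phi}$ using the single-qudit identity $W(p,q)\ket{j} = \om^{pj + 2^{-1}pq}\ket{j+q}$, and then reindex the basis sum by replacing the summation variable $j$ with $j - Q_1$ (and $k$ with $k - Q_2$), so that the shift lands on the argument of the cubic $\Phi$ rather than on the ket. Collecting the phases generated by this reindexing is exactly what produces the term $-2^{-1}(P_1Q_1 + P_2Q_2)$ together with the shifted polynomial $\Phi(j-Q_1,k-Q_2)$. Summing the result over $x,y$ against $\om^{-xA-yB}$ then exhibits the amplitude of $\Pi_{A,B}\ket{\Phi}$ at the basis vector $\ket{j}\ket{k}$ as $d^{-4}\sum_{x,y\in\Z_d}\om^{\Psi(x,y)}$, where $\Psi$ is verbatim the stated polynomial (reading its second quadratic summand as the product $(xp_2+yp'_2)(xq_2+yq'_2)$). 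Consequently $\Pi_{A,B}\ket{\Phi}=0$ if and only if, for every $(j,k)\in\Z_d^2$, the exponential sum $\sum_{x,y}\om^{\Psi(x,y)}$ vanishes. Note that this decouples the single probability-zero condition into $d^2$ independent amplitude equations, one per $(j,k)$, which is what lets the conclusion be phrased pointwise in $(j,k)$.

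It remains to show that such an exponential sum vanishes if and only if $\Psi$ is a permutation polynomial, and this is the conceptual crux. Grouping the sum by value gives $\sum_{x,y}\om^{\Psi(x,y)} = \sum_{v\in\Z_d} N_v\,\om^v$, where $N_v = |\Psi^{-1}(v)|$ are nonnegative integers summing to $d^2$. If $\Psi$ is a permutation polynomial then every $N_v = d$ and the sum is $d\sum_v \om^v = 0$. Conversely---and here the primality of $d$ is essential---the minimal polynomial of $\om$ over $\Q$ is the $d$th cyclotomic polynomial $1 + x + \cdots + x^{d-1}$, so the space of $\Q$-linear relations among $1,\om,\dots,\om^{d-1}$ is one-dimensional and spanned by the all-ones vector; hence $\sum_v N_v\om^v = 0$ forces all $N_v$ to coincide, necessarily at the common value $d = d^{2-1}$, which is precisely the defining condition for the two-variable polynomial $\Psi$ to be a permutation polynomial. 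The main obstacle is not any single calculation but keeping the phase bookkeeping exact through the reindexing so that $\Psi$ emerges in the stated form; the real weight of the lemma sits in this final cyclotomic step, where the vanishing of one character sum is upgraded, via $\Q$-linear independence, into the rigid statement that $\Psi$ is balanced.
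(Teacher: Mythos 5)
Your proposal is correct and follows essentially the same route as the paper: factor the joint spectral projector as $d^{-2}\sum_{x,y}\om^{-xA-yB}U^xV^y$, merge $U^xV^y$ into a single Weyl operator via commutativity, apply it to $\ket{\Phi}$ and reindex the basis sum so the shift lands inside $\Phi$, then read off impossibility as the vanishing of each $(j,k)$-amplitude $\sum_{x,y}\om^{\Psi(x,y)}$. Your closing cyclotomic argument (using that $1+x+\cdots+x^{d-1}$ is the minimal polynomial of $\om$ for prime $d$) is simply a fully spelled-out version of the paper's one-line claim that a vanishing sum of $d^2$ many $d$th roots of unity must hit each root exactly $d$ times, and you also correctly identify the typo in the stated $\Psi$ (the missing product in the second $-2^{-1}P_2Q_2$ term).
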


We will require elements of Dickson's classification of one-variable permutation polynomials over finite fields of low degree \cite{dickson1896analytic}.

\begin{thm}[Dickson, 1896]
\label{thm:dickson}
Suppose $d$ is a prime power, $d \not\equiv 1 \;(\textrm{mod}\ 3)$ and $f: \Z_d \to \Z_d$ is a polynomial of degree at most 3.  Then, $f$ is a permutation polynomial if and only if $f(x) = ag(x + b) + c$ where $a \neq 0$ and $g(x) = x$ or $x^3$.
\end{thm}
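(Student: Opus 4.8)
The plan is to classify the permutation polynomials of degree at most $3$ by reducing $f$ to a normal form: pre- and post-composition with affine bijections $x \mapsto ux+v$ ($u \neq 0$) preserves the property of being a permutation polynomial, so I may normalize $f$ freely by translating and scaling. The ``if'' direction is immediate and already pins down the hypothesis: the identity is trivially a bijection, while $x \mapsto x^3$ restricts to a bijection of the cyclic group $\mathbb{F}_d^\times$ (of order $d-1$) exactly when $\gcd(3,d-1)=1$, i.e. when $d \not\equiv 1 \;(\textrm{mod}\ 3)$, and fixes $0$; translating and scaling then shows every $ag(x+b)+c$ with $g \in \{x,x^3\}$ is a permutation polynomial. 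It remains to prove that no other $f$ of degree at most $3$ permutes $\mathbb{F}_d$.

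I would argue by degree. Constants are never permutations, and degree-$1$ polynomials are exactly the affine bijections $a(x+b)$. For degree $2$, completing the square writes $f(x)=a(x+h)^2+k$, so the question reduces to injectivity of $x\mapsto x^2$; since $x^2=(-x)^2$ this map is two-to-one on $\mathbb{F}_d^\times$, and no quadratic is a permutation polynomial. For degree $3$, I complete the cube: the substitution $x \mapsto x - a_2/(3a_3)$ removes the quadratic term, and after scaling by $a_3^{-1}$ and subtracting the constant I reach the depressed form $f(x)=x^3+cx$. The classification then reduces to the single claim that $x^3+cx$ is a permutation polynomial if and only if $c=0$ (the case $c=0$ being $g=x^3$).

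The crux is to show $x^3+cx$ fails injectivity whenever $c \neq 0$. A collision $x^3+cx=y^3+cy$ with $x\neq y$ is, upon dividing by $x-y$, equivalent to a solution of $x^2+xy+y^2=-c$ with $x\neq y$. I would study the binary form $Q(x,y)=x^2+xy+y^2$: setting $t=x/y$ shows $Q$ vanishes nontrivially iff $t^2+t+1$ has a root, i.e. iff $\mathbb{F}_d$ contains a primitive cube root of unity $\zeta$, which occurs exactly when $d\equiv 1\;(\textrm{mod}\ 3)$. Under our hypothesis $Q$ is therefore anisotropic, and (after $y\mapsto -y$) it is equivalent to the norm form of $\mathbb{F}_{d^2}/\mathbb{F}_d$; surjectivity of the norm map, together with its kernel of size $d+1$, forces $Q$ to represent every nonzero element of $\mathbb{F}_d$ exactly $d+1$ times. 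As the diagonal $x=y$ supplies at most the two solutions of $3x^2=-c$ and $d+1>2$, some representation of $-c$ has $x\neq y$, producing the required collision. Hence $x^3+cx$ is not a permutation polynomial for $c\neq 0$, completing the argument.

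The main obstacle is the exceptional behaviour in characteristics $2$ and $3$, where the two normalizing reductions break down: completing the cube is impossible when $3=0$, and completing the square when $2=0$. There the Frobenius endomorphism $x\mapsto x^p$ is a field automorphism and yields extra low-degree permutation polynomials lying outside the stated normal forms---for example $x^2$ over $\mathbb{F}_8$, or the family $x^3-ax$ with $a$ a nonsquare over $\mathbb{F}_{3^m}$---so the classification must be enlarged in those characteristics. This is harmless for the present application, where $d$ is an odd prime with $d>3$ and $d\not\equiv 1\;(\textrm{mod}\ 3)$: then $\mathbb{F}_d=\Z_d$ has characteristic exceeding $3$, so $d\equiv 2\;(\textrm{mod}\ 3)$, the norm-form identification is valid, and the argument above applies verbatim.
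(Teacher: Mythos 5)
The paper does not actually prove this statement: it is imported as a classical result, with a citation to Dickson's 1896 classification of low-degree permutation polynomials, and is then applied as a black box in the proof of the main theorem. Your proposal is therefore a genuinely different route by construction---a self-contained proof---and, for the case the paper actually uses, it is correct. The normalization by affine pre-/post-composition, the two-to-one argument eliminating quadratics, the reduction of cubics to the depressed form $x^3+cx$, and the key collision argument (identifying $x^2+xy+y^2$ with the norm form of $\mathbb{F}_{d^2}/\mathbb{F}_d$ when $d \equiv 2 \;(\textrm{mod}\ 3)$, so that $-c \neq 0$ has exactly $d+1$ ordered representations, of which at most $2$ lie on the diagonal $3x^2=-c$) are all sound, and together they yield an elementary, counting-based proof that $x^3+cx$ fails to be injective for $c \neq 0$. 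Your caveat is also genuinely valuable and worth recording: as printed, with $d$ an arbitrary prime power $\not\equiv 1 \;(\textrm{mod}\ 3)$, the statement is false---$x^2$ over $\mathbb{F}_8$ (Frobenius) and the additive polynomials $x^3-ax$ with $a$ a nonsquare in characteristic $3$ (e.g. $x^3+x$ over $\Z_3$) are permutation polynomials of degree at most $3$ not of the stated normal form, and Dickson's actual classification lists these exceptional families separately. The hypothesis should effectively be that the characteristic exceeds $3$, which is harmless here since the paper's application has $d$ an odd prime greater than $3$ (the cubic description of third-level diagonal gates already requires $d>3$). In short: the citation buys the paper brevity; your proof buys self-containedness plus the detection of this real, if inconsequential, overstatement in the theorem as quoted.
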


By Lemma \ref{lem:lineargs}, a hidden variable $\lambda$ prescribes to $W(p_1, q_1, p_2, q_2)$ the outcome $\lambda_1 p_1 + \lambda_2 q_1 + \lambda_3 p_2 + \lambda_4 q_2$.  By substituting the phase points of our chosen measurements, the coefficients of the polynomial describing the state $\ket{\Phi}$, and the outcomes prescribed by a hidden variable $\lambda$ into the master equation  of Lemma \ref{lem:permpoly}, we see that inconsistency of $\lambda$ with $\ket{\Phi}$ for the measurements of Table 1 is equivalent to one of the following $d(d+1)$ polynomials being a permutation polynomial for all possible choices of $j,k \in \Z_d$:
\begin{align*}
\Psi_{\mathrm{I}_\alpha}(x,y) &= x (j - \lambda_2) + y^2 (j \phi_2 - 2^{-1}\alpha) + y(\alpha (k - \lambda_4) - \lambda_3  \\ 
&\quad\; - j^2 \phi_1 - 2 j k \phi_2) \\
\\
\Psi_{\mathrm{II}_\alpha}(x,y) &= x (k - \lambda_4) + y^2 (k \phi_1 - 2^{-1}\alpha) + y(\alpha (j - \lambda_2) - \lambda_1  \\ 
&\quad\; - k^2 \phi_2 - 2 j k \phi_1) \\
\\
\Psi_{\mathrm{III}_{\alpha,\beta}}(x,y) &= x (j - \lambda_2 + \beta(k - \lambda_4) ) + y^3 (\beta^{-1}(\phi_1 - \beta^{-1}\phi_2)  \\ 
&\quad\; + y^2 (\beta^{-1}(2^{-1}\alpha - 2 j \phi_1 - 2 k \phi_2 + \beta^{-1} j \phi_2) + k \phi_1 )  \\ 
&\quad\; + y(\alpha(k - \lambda_4)+ \beta^{-1}(\lambda_3 + j^2 \phi_1 + 2 j k \phi_2) - \lambda_1  \\ 
&\quad\; - 2 j k \phi_1 - k^2 \phi_2 )  
\end{align*}

In the next subsection, we prove by inspection that this must hold and, thus, that no hidden variable is consistent with $\ket{\Phi}$.  We conclude that all strong magic states are strongly contextual with respect to stabilizer operations.

\subsection{Proofs of main results}

Here, we provide explicit details of the proofs of the main results.  First, we establish that the algebraic relations between commuting Weyl operators enforce a strong condition on hidden variables: they must be group homomorphisms from phase space to outcomes.  This dramatically reduces (from $d^{d^{2n}}$ functions to $d^{2n}$ homomorphisms) the number of hidden variables we need to consider.  The assumption of multiple qudits is crucial here.

\lineargs*

\begin{proof}
The hidden variable $\lambda$ prescribes the outcome $\lambda(\textbf{p},\textbf{q})$ to $W(\textbf{p},\textbf{q})$.  If $\lambda$ is consistent with a quantum state, these outcomes must respect the algebraic relations between commuting Weyl operators: $$\lambda(\textbf{p},\textbf{q}) + \lambda(\textbf{p'},\textbf{q'}) = \lambda(\textbf{p+p'},\textbf{q+q'})$$ whenever $$[(\textbf{p},\textbf{q}),(\textbf{p'},\textbf{q'})] = 0.$$  Thus, when $n = 2$:  $$\lambda(p_1, q_1, p_2, q_2) = \lambda(p_1, q_1, 0,0) + \lambda(0,0, p_2, q_2)$$ and, whenever $p_1 q_1 = -p_2 q_2$, we have that \begin{align*}\lambda(p_1, q_1, p_2, q_2) &= \lambda(p_1, 0, p_2, 0) + \lambda(0, q_1, 0, q_2) \\
&= \lambda(p_1, 0, 0, 0) + \lambda(0, 0, p_2, 0) \\ & \quad + \lambda(0, q_1, 0, 0) + \lambda(0, 0, 0, q_2).\end{align*}  Thus, $\lambda$ is linear in phase coordinates for the first qudit:
\begin{align*}
& \lambda(1, k, 0, 0) \\
&= \lambda(1, k, 0, -2^{-1}k) + \lambda(0,0,0,2^{-1}k) \\
 &= [\lambda(1, 2^{-1}k, 1, -2^{-1}k) + \lambda(0,2^{-1}k,-1,0)] + \lambda(0,0,0,2^{-1}k) \\
 &= [\lambda(1, 0,0,0) + \lambda(0, 2^{-1}k, 0,0) + \lambda(0, 0, 1, 0) + \lambda(0,0,0, -2^{-1}k)] \\
 & \quad\quad\quad\quad\quad\quad\; + [\lambda(0,2^{-1}k,0,0) + \lambda(0,0,-1,0)] + \lambda(0,0,0,2^{-1}k) \\ 
  &= \lambda(1, 0,0, 0) + \lambda(0, k,0, 0)
\end{align*}
Similarly, $\lambda(0,0,1, k) = \lambda(0,0,1,0) + \lambda(0,0,0,k)$.  Thus, $\lambda$ is linear.  For $n > 2$, the same argument holds.
\end{proof}

Next, we establish a master equation governing the possibility vs. impossibility of observing a given joint outcome upon measurement of a pair of commuting Weyl measurements on a state $\ket{\Phi}$.  The lemma is easily extended to more qudits/measurements and to polynomials $\Phi$ of any degree and holds for any prime dimension $d$.

\permpoly*

\begin{proof}
The eigenvalues of a Weyl operator $W$ are $\om^k$ for $k \in \Z_d$.  The projection onto the 1-eigenspace of $W$ is given by $$\Pi(0|W) = d^{-1}\sum_{x \in \Z_d} W^x.$$  This can be seen by noting that $W^d = \mathbb{I}$, $W^\dagger = W^{-1}$.  Therefore, the projection onto the $\om^k$-eigenspace of $W$ is given by $$\Pi(k|W) = d^{-1}\sum_{x \in \Z_d} \om^{-xk}W^x.$$

Measuring $U = W(\textbf{p}, \textbf{q})$ and $V = W(\textbf{p'}, \textbf{q'})$ with $[U,V] = 0$ and obtaining the joint outcome $(A,B) \in \Z_d^2$ corresponds to the projection: 

\begin{align*}
\Pi(A,B|U,V) &=  \Pi(A|U) \Pi(B|V)  \\
&= d^{-2} (\sum_{x \in \Z_d} \om^{- x A} U^x)(\sum_{y \in \Z_d} \om^{-y  B} V^y) \\
&= d^{-2} \sum_{x,y \in \Z_d} \om^{- x A - y B} U^x V^y \\
&= d^{-2} \sum_{x,y \in \Z_d} \om^{- x A - y B} W(\textbf{P},\textbf{Q}) \\
&= d^{-2} \sum_{x,y \in \Z_d} \om^{- x A - y B-2^{-1}(\textbf{P} \cdot \textbf{Q})} Z(P_1) X(Q_1) \otimes Z(P_2) X(Q_2)
\end{align*}
where $(\textbf{P},\textbf{Q})$ is shorthand for $x(\textbf{p},\textbf{q}) + y(\textbf{p'},\textbf{q'})$.

Applying this to a two-qudit magic state $\ket{\Phi} = d^{-2} \sum_{j,k \in \Z_d} \om^{\Phi(j,k)} \ket{j}\ket{k}$, we obtain:

\begin{strip}
\par\noindent\rule[8pt]{0.5 \textwidth}{0.4pt}
\begin{align*}
\hspace{7em} \Pi(A,B|U,V)\ket{\Phi} &= d^{-4} \sum_{x,y \in \Z_d} \sum_{j,k \in \Z_d} \om^{- x A - y B - 2^{-1}(P_1 Q_1 + P_2 Q_2)+\Phi(j,k)} Z(P_1) X(Q_1) \otimes Z(P_2) X(Q_2) \ket{j}\ket{k} \\
&= d^{-4} \sum_{x,y \in \Z_d} \sum_{j,k \in \Z_d} \om^{- x A - y B - 2^{-1}(P_1 Q_1 + P_2 Q_2)+\Phi(j,k)} Z(P_1) \otimes Z(P_2) \ket{j + Q_1}\ket{k + Q_2} \\
&= d^{-4} \sum_{x,y \in \Z_d} \sum_{j,k \in \Z_d} \om^{- x A - y B - 2^{-1}(P_1 Q_1 + P_2 Q_2)+\Phi(j - Q_1,k - Q_2)} Z(P_1) \otimes Z(P_2) \ket{j}\ket{k} \\
&= d^{-4} \sum_{j,k \in \Z_d} \sum_{x,y \in \Z_d} \om^{- x A - y B - 2^{-1}(P_1 Q_1 + P_2 Q_2) + j P_1 + k P_2 +\Phi(j - Q_1,k - Q_2)} \ket{j}\ket{k} 
\end{align*}
\par\noindent\hspace{0.5 \textwidth}\rule{0.5 \textwidth}{0.4pt}
\end{strip}

Observing the joint outcome $(A,B)$ for the measurements $U,V$ on the state $\ket{\Phi}$ is impossible precisely when $\Pi(A,B|U,V)\ket{\Phi} = 0$; that is, when the terms $$\sum_{x,y \in \Z_d} \om^{- x A - y B - 2^{-1}(P_1 Q_1 + P_2 Q_2) + j P_1 + k P_2 +\Phi(j - Q_1,k - Q_2)} $$ vanish for all $j,k \in \Z_d$.  Such a term is a sum of $d^2$ many primitive $d^\text{th}$-roots of unity and, since $d$ is prime, it vanishes if and only if each $d^\text{th}$-root appears $d$ many times.  Thus, impossibility of the measurement outcome is equivalent to 
\begin{equation*}
\begin{split}
\Psi(x,y) &= - x A - y B - 2^{-1}((x p_1 + y p'_1) (x q_1 + y q'_1) + (x p_2  \\ 
&\quad\; + y p'_2) + (x q_2 + y q'_2)) +  j (x p_1 + y p'_1) + k (x p_2 + y p'_2) \\ 
&\quad\; + \Phi(j - (x q_1 + y q'_1),k - (x q_2 + y q'_2))
\end{split}
\end{equation*}
being a permutation polynomial in $x,y$ for all $j,k \in \Z_d$.
\end{proof}

We are now ready to prove our main theorem.

\main*

\begin{proof}

Suppose \begin{align*}
\ket{\Phi} &= d^{-2} \sum_{j,k \in \Z_d^2} \om^{\Phi(j,k)} \ket{j}\ket{k}  \\
&= d^{-2} \sum_{j,k \in \Z_d^2} \om^{\phi_1 j^2 k + \phi_2 j k^2 + q(j,k)} \ket{j}\ket{k}
\end{align*}
is a strong magic state and $C$ is any Clifford gate.  As Clifford gates act as permutations on the Pauli operators under conjugation, they preserve all properties of contextuality with respect to stabilizer measurements.  Therefore, by constructing the diagonal Clifford gate $U_q = \sum_{j,k \in \Z_d^2} \om^{q(j,k)} \ket{j}\ket{k} \bra{k}\bra{j}$, we can apply the Clifford gate $(CU_q)^\dagger$ to $C \ket{\Phi}$ and find that $C \ket{\Phi}$ is strongly contextual if and only if the state $d^{-2} \sum_{j,k \in \Z_d^2} \om^{\phi_1 j^2 k + \phi_2 j k^2} \ket{j}\ket{k}$ is.  So, without loss of generality, $C = \I$ and $\Phi(j,k) = \phi_1 j^2 k + \phi_2 j k^2$ with either $\Phi_1 \not\equiv 0$ or $\Phi_2 \not\equiv 0$.  By possibly swapping qudits, we can assume that $\Phi_1 \not\equiv 0$.

To prove that $\Ket{\Phi}$ is strongly contextual, we must prove that every possible hidden variable $\lambda: \Z_d^{2n} \to \Z_d$ is inconsistent with  $\Ket{\Phi}$.  Given a pair of state and hidden variable, we will show that for one of the contexts of Table 1, the joint outcome predicted by $\lambda$ to occur will, in fact, have probability zero of being the result of performing the measurements of that context on the state $\ket{\Phi}$.  More specifically, we will show that if a $\lambda$ predicts joint outcomes with nonzero probability for all contexts of type $\text{I}_\alpha$ and $\text{II}_\alpha$, it must predict a joint outcome with probability zero for the measurements of some context of type $\text{III}_{\alpha,\beta}$.

 \setcounter{table}{0}
\begin{table}[H]
\centering
\begin{tabular}{lllll}
\textbf{Type\quad}                 & \textbf{Operators}                                                    & \textbf{Phase points}                                 &  &  \\
$\mathrm{I}_\alpha$           & $Z \otimes \I$ and $\I \otimes Z^\alpha X$                   & $(1,0,0,0)$ and $(0,0,\alpha,1)$             &  &  \\
$\mathrm{II}_\alpha$          & $\I \otimes Z$ and $Z^\alpha X \otimes \I$                   & $(0,0,1,0)$ and $(\alpha, 1,0,0)$            &  &  \\
$\mathrm{III}_{\alpha,\beta}$ & $Z \otimes Z^\beta$ and $X \otimes Z^\alpha X^{-\beta^{-1}}\;$ & $(1,0,\beta, 0)$ and $(0,1,\alpha, -\beta^{-1})$ &  & 
\end{tabular}
\caption{The three families of contexts needed for our argument.  Here, $\alpha, \beta \in \Z_d$ and $\beta \neq 0$.  Operator phases omitted.}
\end{table}

By Lemma \ref{lem:lineargs}, $\lambda$ prescribes to $W(p_1, q_1, p_2, q_2)$ the outcome $\lambda_1 p_1 + \lambda_2 q_1 + \lambda_3 p_2 + \lambda_4 q_2$.  By substituting the phase points of our chosen measurements, the coefficients of the polynomial describing the state $\ket{\Phi}$, and the outcomes prescribed by a hidden variable $\lambda$ into the master equation  of Lemma \ref{lem:permpoly}, we see that inconsistency of $\lambda$ with $\ket{\Phi}$ for the measurements of Table 1 is equivalent to one of the following $d(d+1)$ polynomials being a permutation polynomial for all possible choices of $j,k \in \Z_d$:
\begin{align*}
\Psi_{\mathrm{I}_\alpha}(x,y) &= x (j - \lambda_2) + y^2 (j \phi_2 - 2^{-1}\alpha) + y(\alpha (k - \lambda_4) - \lambda_3  \\ 
&\quad\; - j^2 \phi_1 - 2 j k \phi_2) \\
\\
\Psi_{\mathrm{II}_\alpha}(x,y) &= x (k - \lambda_4) + y^2 (k \phi_1 - 2^{-1}\alpha) + y(\alpha (j - \lambda_2) - \lambda_1  \\ 
&\quad\; - k^2 \phi_2 - 2 j k \phi_1) \\
\\
\Psi_{\mathrm{III}_{\alpha,\beta}}(x,y) &= x (j - \lambda_2 + \beta(k - \lambda_4) ) + y^3 (\beta^{-1}(\phi_1 - \beta^{-1}\phi_2)  \\ 
&\quad\; + y^2 (\beta^{-1}(2^{-1}\alpha - 2 j \phi_1 - 2 k \phi_2 + \beta^{-1} j \phi_2) + k \phi_1 )  \\ 
&\quad\; + y(\alpha(k - \lambda_4)+ \beta^{-1}(\lambda_3 + j^2 \phi_1 + 2 j k \phi_2) - \lambda_1  \\ 
&\quad\; - 2 j k \phi_1 - k^2 \phi_2 )  
\end{align*}

We have dropped all terms constant in $x,y$ as the translation of a permutation polynomial is still a permutation polynomial.  Notice that all the above polynomials $p(x,y)$ in $x,y$ are the sum of two single-variable polynomials $p_x(y),p_y(y)$ in $x$ and $y$: $p(x,y) = p_x(x) + p_y(y)$.  If $p_x(x)$ is a single-variable permutation polynomial, it follows that $p(x,y)$ is a two-variable permutation polynomial.  Our proof thereby proceeds by closely analysing these polynomials and repeatedly applying Dickson's theorem (Theorem 2).

First, consider $\Psi_{\mathrm{I}_\alpha}(x,y)$ for  $\alpha = 2 \lambda_2 \phi_2$.  If $\lambda$ is consistent with $\ket{\Phi}$, there must exist a pair $j,k \in \Z_d$ for which $\Psi_{\mathrm{I}_\alpha}$ is \emph{not} a permutation polynomial in $x,y$.  When $j \neq \lambda_2$, $\Psi_{\mathrm{I}_\alpha}(x,y)$ is the sum of a nonzero linear term in $x$ (a single-variable permutation polynomial) and a polynomial in $y$; thus, it is a permutation polynomial in $x,y$.  Therefore, $j = \lambda_2$.  Making this substitution, the $x$ term vanishes and, by our choice of $\alpha$, so does the coefficient for the $y^2$ term.  As $\Psi_{\mathrm{I}_\alpha}$ is simply linear in $y$ and, by assumption, not a permutation polynomial,  its linear coefficient must vanish.  Thus, either $\lambda$ is inconsistent with $\ket{\Phi}$ or $\lambda_3 = -\lambda_2(2 \lambda_4 \phi_2 + \lambda_2 \phi_1)$.

By a similar analysis of $\Psi_{\mathrm{II}_\alpha}(x,y)$, we find that by choosing $\alpha = 2 \lambda_4 \phi_1$, either $\lambda$ is inconsistent with $\ket{\Phi}$ or $\lambda_1 = -\lambda_4(2 \lambda_2 \phi_1 + \lambda_4 \phi_2)$.

Finally, we make the substitutions for $\lambda_1,\lambda_3$ in $\Psi_{\mathrm{III}_{\alpha,\beta}}(x,y)$ and note that, as they too are linear in $x$, they are permutation polynomials unless $j = \lambda_2 - \beta( k - \lambda_4)$.  For convenience, we multiply the resulting expression by $\beta \neq 0$: \begin{align*}
&y^3 (\phi_1 - \beta^{-1} \phi_2) + y^2(2^{-1} \alpha + 3 k (\beta \phi_1 - \phi_2) + \beta^{-1} \lambda_2 \phi_2 + \lambda_4 \phi_2 \\ &- 2 \beta \lambda_4 \phi_1 - 2 \lambda_2 \phi_1 ) + y( 3 \beta k^2(\beta \phi_1 - \phi_2) + k (\beta  (\alpha -4 \lambda_2 \phi _1 \\ &+ 2 \lambda_4 \phi _2) - 4 \beta^2 \lambda_4 \phi_1 + 2 \lambda_2 \phi _2) + \beta (-\alpha \lambda_4 + \lambda_4^2 \phi_2 + 4 \lambda_2 \lambda_4 \phi_1) \\ &+\beta^2 \lambda_4^2 \phi_1 - 2 \lambda_2 \lambda_4 \phi_2 )
\end{align*}

We consider two cases.  First, if $\phi_2 \equiv -1$, then, by choosing $\alpha = 6(\lambda_2 \phi_1 - \lambda_4)$ and $\beta = \phi_1^{-1}$, the resulting polynomial factors as:$$ 2 (y + \phi_1^{-1}(k - \lambda_4))^3.$$

However, if $\phi_2 \not\equiv -1$, we may choose $\alpha = 2(\phi_2 + 1)^{-1}(\lambda_2 \phi_1 (\phi_2 + 2) + \lambda_4 (\phi_2^2 - 1))$ and $\beta = \phi_1^{-1}(\phi_2 + 1)$; the resulting polynomial factors as:$$(y + \phi_1^{-1}(k - \lambda_4)(\phi_2+1))^3.$$  In both cases, the type $\mathrm{III}$ polynomial, for our choices of $\alpha$ and $\beta$, is a permutation polynomial for all $j,k \in \Z_d$ and, therefore, $\lambda$ is inconsistent with $\ket{\Phi}$. We conclude that $\ket{\Phi}$ is strongly contextual with respect to two-qudit stabilizer measurements.\end{proof}

\section{Conclusions}

The surprising finding of an abundance of exotic quantum paradoxes (of arbitrarily large dimension) among those magic states that are optimal as resources for quantum computation bolsters a refinement of the resource theory of contextuality that emphasises the computational power of logical paradoxes.  This intuitive hypothesis for the origin of the computational advantages provided by contextuality emerges from observing the pattern across informatic settings of determinstic advantages enabled by strongly contextual resources together with the view of standard contextuality as a stochastic mixture of classical and paradoxical parts \citep{abramsky2017contextual}.  It is rendered more compelling by its applicability in the experimentally tractable setting of universal quantum computation via state injection.

A key desideratum of a property of systems considered to be a resource is that the benefits conferred scale with the strength of that property.  Thus, our results validate both the idea of contextuality as an essential resource in quantum computation and seeing logical paradoxes as a highly useful resource-theoretic notion of extremal contextuality among many potential measures of contextuality.

Outside of purely communicational tasks, it is in the setting of measurement-based computation where the centrality of logical paradoxes to enabling advantage has been made most clear.  Here, the classical part  of a resource offers no benefits; it is the paradox that does all the work.  Our results differ from this setting in two critical ways.  There, paradoxes enable enlarging the class of finite functions that can be computed (with resources that may be quantum) whereas our interest is in enabling full-blown universal quantum computation in the circuit model (with resources that must be quantum).  

The second key difference is that measurement-based computation, framed entirely in terms of measurements on a resource state, is naturally suited to analysis in terms of contextuality.  In the circuit-based model, it is \emph{transformations} of a system that performs the computational work which adds a layer of conceptual obstacles to a fine-grained analysis of how contextuality powers computation.   Nevertheless, the salience of the logical perspective survives in the complex setting of quantum computation.

\subsection{Future work}
We indicate questions for future work in two main directions: along purely logical lines towards developing a structural understanding of how contextuality enables computational advantage, and along concretely quantum computational lines.

The most immediate task is to precisely delineate the mechanism by which paradoxes aid in achieving computational advantage. This is most clearly done in a purely logical setting, without the complications and excessive assumptions of choosing a physical theory.  By interpreting these results within quantum theory, such logical insights can be practically applied. As this work takes place in the circuit model of quantum computing which explicitly involves transformations of a physical system (in contrast to the measurement-based setting of previous work) this raises the question of how to incorporate dynamical aspects into frameworks of contextual logic.  Given the adaptivity of measurements in measurement-based computational models, there is the need to incorporate causality into frameworks of contextual logic.  One might then ask:  what is the interaction between causal structures on measurements and the aforementioned dynamical aspects?  Furthermore, can the distillation of weak contextuality into strong computational resources, exemplified in the quantum setting by the magic state distillation protocol, be understood at a logical level?

At a concrete level, our work leaves open the question of a complete classification of quantum paradoxes in qudit stabilizer mechanics.  It also raises questions about the exact nature of the mutual relationships between the Clifford hierarchy,  quantum resources for deterministic injection, and strong contextuality.  For example, it is reasonable to conjecture that every strongly contextual two-qudit state admits a deterministic injection protocol enabling universal quantum computation.  It remains open whether these magic state-based paradoxes admit cohomological witnesses \cite{ccp, okay2017topological} or whether it is possible to relate the contextual fraction of a magic state to the difficulty or length of its associated protocols for distillation/injection.

\begin{acks}                            
We thank Samson Abramsky, Rui Soares Barbosa, and Eric Cavalcanti for discussions on contextuality and Mark Howard for discussions on the Clifford hierarchy and state injection.  Financial support was provided by the EPSRC via grant EP/N017935/1 (Contextuality as a resource in quantum computation).  This work was partially conducted during the Logical Structures in Computation programme at the Simons Institute for the Theory of Computing at the University of California, Berkeley.
\end{acks}

\bibliography{licscontextuality}

\end{document}